\newtheorem{thm}{Theorem}
\newtheorem{prop}[thm]{Proposition}
\newtheorem{cor}{Corollary}
\theoremstyle{definition}
\newtheorem{defn}{Definition}
\def\BibTeX{{\rm B\kern-.05em{\sc i\kern-.025em b}\kern-.08em
    T\kern-.1667em\lower.7ex\hbox{E}\kern-.125emX}}
\begin{document}

\title{Database Matching Under Column Deletions\\
\thanks{This work is supported by NYU WIRELESS Industrial Affiliates and National Science Foundation grant CCF-1815821.}}

\author{\IEEEauthorblockN{Serhat Bakırtaş}
\IEEEauthorblockA{\textit{Dept. of Electrical and Computer Engineering} \\
\textit{New York University}\\
NY, USA \\
serhat.bakirtas@nyu.edu}
\and 
\IEEEauthorblockN{Elza Erkip}
\IEEEauthorblockA{\textit{Dept. of Electrical and Computer Engineering} \\
\textit{New York University}\\
NY, USA \\
elza@nyu.edu}
}

\maketitle

\begin{abstract}
De-anonymizing user identities by matching various forms of user data available on the internet raises privacy concerns. A fundamental understanding of the privacy leakage in such scenarios requires a careful study of conditions under which correlated databases can be matched. Motivated by synchronization errors in time indexed databases, in this work, matching of random databases under random column deletion is investigated. Adapting tools from information theory, in particular ones developed for the deletion channel, conditions for database matching in the absence and presence of deletion location information are derived, showing that partial deletion information significantly increases the achievable database growth rate for successful matching. Furthermore, given a batch of correctly-matched rows, a deletion detection algorithm that provides partial deletion information is proposed and a lower bound on the algorithm's deletion detection probability in terms of the column size and the batch size is derived. The relationship between the database size and the batch size required to guarantee a given deletion detection probability using the proposed algorithm suggests that a batch size growing double-logarithmic with the row size is sufficient for a nonzero detection probability guarantee.
\end{abstract}

\begin{comment}
     \begin{IEEEkeywords}
component, formatting, style, styling, insert
\end{IEEEkeywords}
\end{comment}

\section{Introduction}
\label{sec:introduction}

In the last decade, especially with the proliferation of smart devices and the rise of social media, there has been a boom in data collection. As the collection of potentially sensitive personal data by companies and governments has increased, so has the risk of privacy leakage due to sale and publication of collected data. The privacy concerns over the publication of the anonymized data  have been articulated recently where \cite{naini2015you,datta2012provable,narayanan2008robust,sweeney1997weaving,takbiri2018matching} have shown that anonymization is not sufficient on its own to prevent privacy leakage. In particular, these works devise practical attacks and use them on real data to match anonymized database with publicly available user information. While these attacks work efficiently on real data, \cite{naini2015you,datta2012provable,narayanan2008robust,sweeney1997weaving,takbiri2018matching} do not suggest a fundamental understanding of what kind of data is vulnerable to privacy attacks. 

More recently matching of correlated databases have been rigorously investigated in \cite{shirani8849392} and \cite{cullina}. In \cite{shirani8849392}, Shirani \textit{et al.} developed a matching scheme based on joint typicality and derived necessary and sufficient conditions on the database growth rate for realiable matching using an extension of Shannon-McMillan-Breiman Theorem and Fano's inequality. In \cite{cullina}, Cullina \textit{et al.} introduced \textit{cycle mutual information} as a new correlation metric and derived sufficient conditions for a successful matching and a converse result.

In this paper, we further the study of database matching by considering random column deletions. To motivate column deletions, consider the following scenario illustrated in Figure \ref{fig:intro}: We have access to two anonymized databases containing time-indexed transactions of a set of users made respectively through a bank account and a credit card associated with it, where the time indices don't necessarily match, i.e. there may be synchronization errors. By matching these users across these correlated databases, an attacker could gain useful information on user spending profiles or the bank can detect a potentially fraudulent activity.

\begin{figure}[t]
\centerline{\includegraphics[width=0.50\textwidth]{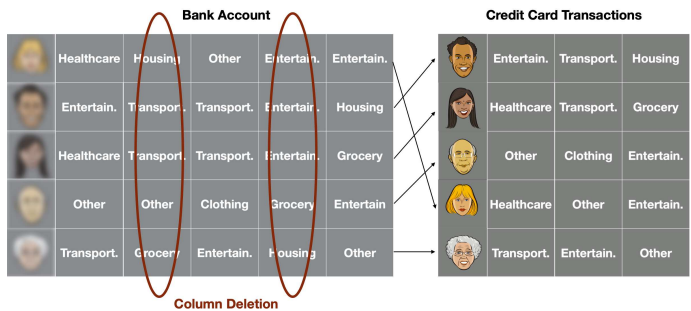}}
\caption{An illustrative example of database matching under column deletions. Each row corresponds to a user and each database entry is the type of a transaction.}
\label{fig:intro}
\end{figure}

We model the above example as a database matching problem where the goal is to match the corresponding rows across databases such that the probability of mismatch goes to zero as the number of attributes in the database (number of columns) grows to infinity. The two databases are assumed to have the same number of users (rows) and are generated according to a bivariate stochastic process as in \cite{shirani8849392}. Different than \cite{shirani8849392}, the second database suffers from \textit{column deletion}. The indices of the deleted columns are not known due to synchronization errors similar to the deletion channel model \cite{cheraghchi2020overview}. We also assume availability of partial deletion location information, where a subset of deleted column indices are known.

 Our goal is to investigate sufficient conditions for the successful matching of rows under column deletions, in the presence of partial deletion location information. We first derive conditions on the database size, deletion probability and amount of partial deletion location information for successful matching. In many practical problems, rather than partial deletion location information, a batch of already-matched rows, which we call \textit{seeds} may be available. Given such a batch, we propose an algorithm which detects deleted columns by exploiting the fact that the same set of columns is deleted in each row. Furthermore, we present a lower bound to this algorithm's deletion detection probability in terms of the column size of the database, $n$, and the row size of the correctly-matched batch, $B$. In turn, we investigate the relation between the row size of the database, $m$ and $B$, for a given performance guarantee in terms of deletion detection probability. We argue that as long as $B$ grows faster than $\log\log m$, all deleted columns can be detected, pointing that even a small seed size may help with matching.

The organization of this paper is as follows: Section \ref{sec:problemformulation} contains the formulation of the problem. In Section \ref{sec:achievabledbgrowthrates}, results on sufficient conditions for the successful database matching are presented. In Section \ref{sec:deletionlocationextraction}, for a given batch of correctly-matched rows, an algorithm for deletion detection is proposed and the relation between the detection probability of the algorithm, the column size and the size of the batch are investigated. Finally, in Section \ref{sec:conclusion} the results are discussed.
\newline
\newline
\noindent{\em Notation:} We denote the set of integers $\{1,2,...,n\}$ as $[n]$,  databases with calligraphic letters, (e.g. $\mathcal{C}$), random vectors with bold uppercase letters. For a set of indices $I_D=\{i_1,i_2,...,i_d\}\subseteq [n]$, we denote the vector $(X_1,...,X_{i_1-1},X_{i_1+1},...,X_{i_2-1},X_{i_2+1},...,X_{i_d-1},X_{i_d+1},...)$ of length $n-d$ with $\mathbf{X}([n]\setminus I_D)$.

\section{Problem Formulation}
\label{sec:problemformulation}

We use the following definitions, some of which are taken from \cite{shirani8849392} to formalize our problem.
\begin{defn}{(Unlabeled Database)}
An $(m,n,p_{X})$ \textit{unlabeled random database} is a randomly generated $m\times n$ matrix $\mathcal{C}=\{X_{i,j}\in\mathfrak{X}^{m\times n}\}$ with \textit{i.i.d.} entries drawn according to the distribution $p_X$ from a discrete alphabet $\mathfrak{X}$. The $i$\textsuperscript{th} row $\mathbf{X}_i$ of $\mathcal{C}$ is said to correspond to user $i$. Here $m$ and $n$ represent the number of users and the number of attributes, respectively.
\end{defn}

\begin{defn}{(Column Deletion Pattern)}
Column deletion pattern $\mathbf{D}^n=\{D_1,D_2,...,D_n\}$ is a random vector with \textit{i.i.d.} Bern$(\delta)\in\{0,1\}$ entries, independent of $\mathcal{C}^{(1)}$, $D_i=1$ indicating that the $i$\textsuperscript{th} column is deleted. The Bernoulli parameter $\delta$ is called the \textit{column deletion probability}.
\end{defn}

\begin{defn}{(Column Deleted Labeled Database)}\label{defn:corrdeldb}
Let $\mathcal{C}^{(1)}$ be an $(m,n,p_{X})$ unlabeled database. Let $\mathbf{D}^n$ be the column deletion pattern, $\boldsymbol{\Theta}$ be a permutation of $[m]$. Given $\mathcal{C}^{(1)}$ and $\mathbf{D}^n$, the pair $(\mathcal{C}^{(2)},\boldsymbol{\Theta})$ is called the \textit{column deleted labeled database} if $\mathbf{R}_i^{(1)}$ and $\mathbf{R}_i^{(2)}$ have the following relation:
$$\mathbf{R}_i^{(2)}=\left\{
    \begin{array}{ll}
      \mathbf{E} , &  \text{if } D_i=1\\
      \boldsymbol{\Theta}\circ \mathbf{R}_i^{(1)} & \text{if } D_i=0
    \end{array} 
\right.$$
where $\mathbf{R}_i^{(j)}$ denotes the $i$\textsuperscript{th} column of the database $\mathcal{C}^{(j)}$ and $\mathbf{R}_i^{(2)}=\mathbf{E}$ corresponds to all entries of $\mathbf{R}_i^{(2)}$ being the empty string. Therefore, given a deletion pattern $\mathbf{D}^n$, the column size of $\mathcal{C}^{(2)}$ is $
\sum\limits_{i=1}^n D_i$, which is a $Binomial(n,1-\delta)$ random variable, independent of the database entries. 

For the databases in Definition \ref{defn:corrdeldb}, the $i$\textsuperscript{th} row $\mathbf{Y}_i$ of $\mathcal{C}^{(2)}$ is said to correspond to the user $\boldsymbol{\Theta}^{-1}(i)$. The rows $\mathbf{X}_{i_1}$ and $\mathbf{Y}_{i_2}$ are said to be \textit{matching rows}, if $\boldsymbol{\Theta}(i_1)=i_2$, where $\boldsymbol{\Theta}$ is called the \textit{labeling function}.
\end{defn}
Notice $\mathcal{C}^{(2)}$ is obtained by shuffling $\mathcal{C}^{(1)}$ with $\boldsymbol{\Theta}$ followed by column deletion, and there is no noise on the retained entries, similar to the deletion channel model \cite{cheraghchi2020overview}.

\begin{defn}{(Deletion Detection Pattern)} 
Given the column deletion pattern $\mathbf{D}^n$, the \textit{column deletion detection pattern} $\mathbf{A}^n=\{A_1,A_2,...,A_n\}$ is a random vector independent of $\mathcal{C}^{(1)}$, with independent entries having the following conditional distribution:
$$P(A_i=1|D_i)=\alpha \mathbf{1}_{[D_i=1]},\quad \forall i\in[n]$$
where $\mathbf{1}_\epsilon$ is the indicator function of event $\epsilon$. The parameter $\alpha\in[0,1]$ is called the \textit{deletion detection probability}.
\end{defn}

\begin{defn}{(Database Growth Rate)}
The \textit{database growth rate} $R$ of an $(m,n,p_X)$ unlabeled database is defined as $$R=\lim\limits_{n\to\infty} \frac{1}{n}\log_2 m$$
\end{defn}
\begin{defn}{(Successful Matching Scheme)}
Given a deletion detection pattern $\mathbf{A}^n$, a \textit{matching scheme} is a sequence of mappings $s_n: (\mathcal{C}^{(1)},\mathcal{C}^{(2)})\to \hat{\boldsymbol{\Theta}}_n $ where $\hat{\boldsymbol{\Theta}}_n\in [m]^m$ is the estimate of the correct permutation $\boldsymbol{\Theta}_n$. The scheme is \textit{successful} if 
$$P(\boldsymbol{\Theta}_n(I)=\hat{\boldsymbol{\Theta}}_n(I))\to 1  \text{ as }n\to\infty$$
where the index $I$ is drawn uniformly from $[m]$. Here the dependence of $\hat{\boldsymbol{\Theta}}_n$ on $\mathbf{A}^n$ is omitted for brevity. 
\end{defn}
\begin{defn}{(Achievable Database Growth Rate)}\label{defn:achievable}
Given a database probability distribution $p_X$, column deletion probability $\delta$ and deletion detection probability $\alpha$, a database growth rate $R$ is said to be achievable if for any pair of databases $(\mathcal{C}^{(1)},\mathcal{C}^{(2)})$ with database growth rate $R$, 
 there exists a successful matching scheme.
\end{defn}

\section{Achievable Database Growth Rates}
\label{sec:achievabledbgrowthrates}
In this section, our goal is to derive achievable database growth rates as in Definition \ref{defn:achievable} and associated matching schemes.

In the following theorem, we consider the following matching strategy: We first discard all the deleted columns of $\mathcal{C}^{(1)}$ that are detected, exploiting the fact that all the rows have the same deletion pattern. Then, we use a row matching scheme following \cite{shirani8849392} and \cite{diggavi1603788}. Our strategy matches each row separately and does not use the fact that each row has identical deletion pattern. In Section \ref{sec:deletionlocationextraction} we show that exploiting the deletion pattern across rows can in fact be very beneficial. Furthermore, it should be emphasized that one could perform the matching at the database level to potentially achieve higher database growth rates.

\begin{thm}\label{thm:achievablealpha}
Consider an unlabeled database generated according to $p_X$ with alphabet $\mathfrak{X}$ and a column deletion probability $\delta<1-\frac{1}{|\mathfrak{X}|}$. For a deletion detection probability $\alpha$, any database growth rate
\begin{align*}
    R<\Big[(1-\alpha\delta)&\left(H(X)-H_b\left(\frac{1-\delta}{1-\alpha\delta}\right)\right)\\
        &-(1-\alpha)\delta \log(|\mathfrak{X}|-1)\Big]^+
\end{align*}
    is achievable, where $H$,$H_b$ and $[.]^+$ denote the entropy, the binary entropy, and the positive part functions respectively.
\end{thm}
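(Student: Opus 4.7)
The plan is to implement the two-stage strategy outlined before the statement. In the first stage, I would use the deletion detection pattern $\mathbf{A}^n$ to delete from $\mathcal{C}^{(1)}$ every column $i$ with $A_i=1$, producing a reduced database $\tilde{\mathcal{C}}^{(1)}$ whose column count, being $\sum_i (1-A_i)$ with $A_i$ i.i.d.\ Bern$(\alpha\delta)$, concentrates on $n(1-\alpha\delta)$ by the weak law of large numbers. After this preprocessing, the pair $(\tilde{\mathcal{C}}^{(1)},\mathcal{C}^{(2)})$ is a matching problem in which the effective (undetected) per-column deletion probability is $\delta'=\frac{(1-\alpha)\delta}{1-\alpha\delta}$, mapping $\tilde{\mathcal{C}}^{(1)}$ (of width $\approx n(1-\alpha\delta)$) onto $\mathcal{C}^{(2)}$ (of width $\approx n(1-\delta)$). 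A direct computation from $\delta<1-1/|\mathfrak{X}|$ gives $\delta'<1-1/|\mathfrak{X}|$ as well, so the deletion-channel lower bounds used below are in their operative range.

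In the second stage, I would apply a row-by-row joint-typicality matching rule in the spirit of Shirani \textit{et al.}\ and the deletion-channel analysis of Diggavi--Grossglauser. For each query row $\mathbf{Y}_i$ of $\mathcal{C}^{(2)}$, declare the matched index to be the unique $j$ such that $(\tilde{\mathbf{X}}_j,\mathbf{Y}_i)$ is jointly typical with respect to a memoryless deletion channel of parameter $\delta'$: there exists a deletion pattern of length $n(1-\alpha\delta)$ with approximately $(1-\alpha)\delta n$ deletions that carries $\tilde{\mathbf{X}}_j$ to $\mathbf{Y}_i$ and produces the expected empirical statistics on both retained and deleted positions; if no unique such row exists, declare an error.

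For the error analysis, correctness on the true row follows from the joint AEP combined with concentration of the realized number of undetected deletions. The crux is to upper-bound the probability that an incorrect row $\tilde{\mathbf{X}}_k$, independent of $\mathbf{Y}_i$, happens to be jointly typical with it. Here I would follow the Diggavi--Grossglauser counting argument: the number of admissible deletion patterns is at most $2^{n(1-\alpha\delta)H_b(\delta')}$; for each pattern, the probability that the retained i.i.d.\ source symbols match the prescribed ones is at most $2^{-n(1-\alpha\delta)H(X)}$; and the worst-case probability contributed by freely chosen symbols at the $(1-\alpha)\delta n$ deleted positions costs at most a factor $(|\mathfrak{X}|-1)^{(1-\alpha)\delta n}$. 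Multiplying these, each incorrect pair is jointly typical with probability at most $2^{-n\bigl[(1-\alpha\delta)(H(X)-H_b(\delta'))-(1-\alpha)\delta\log(|\mathfrak{X}|-1)\bigr]}$, and a union bound over the $m-1\le 2^{nR}$ incorrect rows drives the total false-match probability to zero whenever $R$ is strictly less than this exponent. Using the identity $H_b(\delta')=H_b\!\left(\frac{1-\delta}{1-\alpha\delta}\right)$ and taking $[\cdot]^+$ to cover the trivial case $R=0$ recovers the stated bound.

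The main obstacle will be this false-match bound, specifically the deletion-channel counting that must simultaneously control typicality of the retained symbols (producing the $H(X)$ term) and the adversarial-completion contribution at undetected deletions (producing the $\log(|\mathfrak{X}|-1)$ term). I would also need to verify that the LLN-based approximations for the numbers of detected and undetected deletions in the preprocessing stage can be absorbed into vanishing $\epsilon$-slack in $R$ through standard typical-set arguments, so that the asymptotic threshold is exactly the one claimed.
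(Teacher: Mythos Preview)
Your proposal is correct and follows essentially the same route as the paper's proof: discard the columns flagged by $\mathbf{A}^n$, test each candidate row of the reduced database for containing $\mathbf{Y}$ as a subsequence while being $\epsilon$-typical, bound the pairwise false-match probability via the supersequence-counting estimate $F(n',k,|\mathfrak{X}|)\le n'\,2^{\,n'H_b(k/n')}(|\mathfrak{X}|-1)^{n'-k}$ of Chv\'atal--Sankoff (as used by Diggavi--Grossglauser), and finish with a union bound over the $2^{nR}$ rows together with LLN concentration of $K$ and $A$. The only cosmetic caveat is that your verbal split of the collision probability into ``number of patterns $\times$ retained-match probability $\times$ $(|\mathfrak{X}|-1)$-filling factor'' is not itself a valid decomposition---a naive per-pattern argument would produce $|\mathfrak{X}|$ rather than $|\mathfrak{X}|-1$ at the deleted positions---so in the write-up you should invoke the supersequence-count bound directly, exactly as the paper does.
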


Note that one could rearrange the terms on the right-hand side as the following:
\begin{align*}
    \Big[(1-\delta)H(X)&-(1-\alpha)\delta\left(\log(|\mathfrak{X}|-1)-H(X)\right)\\&-(1-\alpha\delta)H_b\left(\frac{1-\delta}{1-\alpha\delta}\right)\Big]^+
\end{align*}
where the term $(1-\delta) H(X)$ corresponds to achievable rate in the presence of full deletion location information ($\alpha=1$), the second term is the penalty due to a potentially low $H(X)$ causing $\mathcal{C}^{(1)}$ to have similar entries in each row and thus increasing the error probability, and the last term represents the penalty paid for the lack of deletion location information. Since the penalty terms decrease with $\alpha$, intuitively Theorem \ref{thm:achievablealpha} states that as more deleted columns are detected, the matching becomes easier due to lower dimensionality of the search space.

\begin{proof}

Let $\mathbf{D}^n$ and $\mathbf{A}^n$ be the deletion and the deletion detection patterns, respectively. Let $K=n-\sum\limits_{i=1}^n D_i$ be the random variable corresponding to the number of columns in $\mathcal{C}^{(2)}$. Then, for any $\Tilde{\epsilon}>0$ we have 
$$P\left(\left|\frac{K}{n}-(1-\delta)\right|>\Tilde{\epsilon}\right)\to 0\text{ as }n\to\infty$$
Choose $k=\lfloor n(1-\delta-\Tilde{\epsilon})\rfloor$. Note that for any $K\ge k$, $\frac{n-K}{n}\le \delta+\Tilde{\epsilon}$ as $n\to\infty$. Denoting the probability that $K<k$ by $\kappa_n$ and using the Law of Large Numbers, we have $\kappa_n\to 0$ as $n\to\infty$. 

Now, let $I_A$ be the set of detected deletion indices, and $A=|I_A|=\sum\limits_{i=1}^n A_i$. Then, for any $\hat{\epsilon}>0$ we have
$$P\left(\left|\frac{A}{n-k}-\alpha\right|>\hat{\epsilon}\right)\to 0\text{ as }n\to\infty$$
Choose $a=\lfloor (n-k)(\alpha-\hat{\epsilon})\rfloor$. Note that for any $A\ge a$, $\frac{A}{n-k}\ge \alpha-\hat{\epsilon}$ as $n\to\infty$. Denoting the probability that $A<a$ by $\mu_n$, using the Law of Large Numbers, we have $\mu_n\to 0$ as $n\to\infty$. 

Let $A_\epsilon^{(n-a)}(X)$ be the $\epsilon$-typical set associated with $p_X$ with parameter $n-a$. Consider the following matching scheme: First, we discard all columns whose index belongs to $I_A$, since these columns are known to be deleted. Given a row $\mathbf{Y}_{j_1}$ of $\mathcal{C}^{(2)}$, we match the row $\mathbf{X}_{i_1}$ of $\mathcal{C}^{(1)}$ assigning $\hat{\boldsymbol{\Theta}}^{-1}(j_1)=i_1$, if $\mathbf{X}_{i_1}([n]\setminus I_A)$ contains $\mathbf{Y}_{j_1}$,  $\mathbf{X}_{i_1}([n]\setminus I_A)\in A_\epsilon^{(n-a)}(X)$ and there is no other row $\mathbf{X}_{i_2}^n$ of $\mathcal{C}^{(1)}$ with $\mathbf{X}_{i_2}([n]\setminus I_A)\in A_\epsilon^{(n-a)}(X)$ containing $\mathbf{Y}_{j_1}$ potentially in a non-contiguous way. We say that in that case no \textit{collision} occurs. If any of these steps fail, we declare an error.

In addition, the matching scheme only considers $K\ge k,A\ge a$ and otherwise declares an error. Since additional columns in $\mathcal{C}^{(2)}$ and additional detected deleted columns would decrease the collision probability, we have
$$P(\text{collision}|K\ge k,A\ge a)\le P(\text{collision}|K=k,A=a)$$

Denote the pairwise collision probability between $\mathbf{X}_1$ and $\mathbf{X}_i$, by $P_{col,i}$. Therefore given the correct labeling for $\mathbf{Y}\in\mathcal{C}^{(2)}$ is $\mathbf{X}_1\in\mathcal{C}^{(1)}$, the probability of error can be bounded as
\begin{align}
    P_e 
    &\le \sum\limits_{i=2}^{2^{n R}} P_{col,i}+\epsilon+\kappa_n+\mu_n\notag\\
    &\le 2^{n R} P_{col,2}+\epsilon+\kappa_n+\mu_n \label{eq1:Pe}
\end{align}
where we used that the rows are \textit{i.i.d.} and $P_{col,i}=P_{col,2}$.
Let $F(n,k,|\mathfrak{X}|)$ denote the number of $|\mathfrak{X}|$-ary sequences of length $n$, which contain a fixed $|\mathfrak{X}|$-ary sequence of length $k$. Since $\frac{k}{n}\ge 1-\delta-\Tilde{\epsilon}$ and $\delta\le1-\frac{1}{|\mathfrak{X}|}$, we have $\frac{k}{n}\ge \frac{1}{|\mathfrak{X}|}-\Tilde{\epsilon}$. Then from \cite{chvatal1975longest} and \cite{cover2006elements} (Chapter 11) we have the following upper bound for $k\ge\frac{n}{|\mathfrak{X}|}$:
\begin{align*}
    F(n,k,|\mathfrak{X}|)
    &\le n 2^{n H_b\left(k/n\right)} (|\mathfrak{X}|-1)^{n-k}
\end{align*}

Let $T(\mathbf{y},I_A)=\{\mathbf{x}\in\mathfrak{X}^n|\mathbf{x}([n]\setminus I_A)\in A_\epsilon^{(n-a)}$ contains $\mathbf{y}\}$ and $\mathbf{y}$ be the row of $\mathcal{C}^{(2)}$ matching with the row $\mathbf{X}_1$ of $\mathcal{C}^{(1)}$. It is clear that $|T(\mathbf{y},I_A)|\le F(n-a,k,|\mathfrak{X}|)$. Also for any $\mathbf{x}\in T(\mathbf{y},I_A)$, since $\mathbf{x}([n]\setminus I_A)\in A_\epsilon^{(n-a)}$ we have
$$p(\mathbf{x})\le 2^{-(n-a)(H(X)-\epsilon)}$$

Since the rows are \textit{i.i.d.} we have 
$$P(\mathbf{X}_2 \in T(\mathbf{y},I_A)|\mathbf{X}_1 \in T(\mathbf{y},I_A))=P(\mathbf{X}_2 \in T(\mathbf{y},I_A))$$
Then $P_{col,2}$ can be bounded as

\begin{align}
    P_{col,2} &= P(\mathbf{X}_2 \in T(\mathbf{y},I_A))\notag\\
    &=\sum\limits_{\mathbf{x}\in T(\mathbf{y},I_A)} p(\mathbf{x})\notag\\
    &\le \sum\limits_{\mathbf{x}\in T(\mathbf{y},I_A)} 2^{-(n-a)(H(X)-\epsilon)}\notag\\
    &\le 2^{-(n-a)(H(X)-\epsilon)} F(n-a,k,|\mathfrak{X}|)\notag\\
    &\le (n-a) 2^{-(n-a)\left(H(X)-\epsilon-H_b\left(\frac{k}{n-a}\right)\right)}(|\mathfrak{X}|-1)^{n-a-k}\label{eq:pcol}
\end{align}

Combining \eqref{eq:pcol} with \eqref{eq1:Pe}, we have
\begin{align*}
    P_e
    &\le (n-a) 2^{-n\left[(1-\frac{a}{n})\left(H(X)-\epsilon-H_b\left(\frac{k}{n-a}\right)\right)-R\right]}(|\mathfrak{X}|-1)^{n-a-k}\\
    &\qquad +\epsilon+\kappa_n+\mu_n\\
    &\le \epsilon
\end{align*}
 as $n\to\infty$ if 
 \begin{align*}
     R<\Big[\left(1-\frac{a}{n}\right)&\left(H(X)-\epsilon-H_b\left(\frac{k}{n-a}\right)\right)\\
        &-\left(1-\frac{a}{n-k}\right)\frac{n-k}{n} \log(|\mathfrak{X}|-1)\Big]^+\\
 \end{align*}
Thus, we can argue that any rate $R$ satisfying
     \begin{align*}
     R<\Big[(1-\alpha\delta)&\left(H(X)-H_b\left(\frac{1-\delta}{1-\alpha\delta}\right)\right)\\
        &-(1-\alpha)\delta\log(|\mathfrak{X}|-1)\Big]^+
 \end{align*}
 is achievable by taking $\epsilon$, $\Tilde{\epsilon}$ and $\hat{\epsilon}$ small enough.
\end{proof}

\begin{cor}{(No Deletion Location Information)}\label{cor:nodel}
In the absence of deletion location information ($\alpha=0$), any database growth rate $R$ satisfying
$$R<\left[H(X)-H_b(\delta)-\delta \log(|\mathfrak{X}|-1)\right]^+$$
is achievable.
\end{cor}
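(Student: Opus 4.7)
The plan is to obtain this corollary as a direct specialization of Theorem \ref{thm:achievablealpha} at $\alpha = 0$, since setting the deletion detection probability to zero models the scenario in which no deletion location information is available to the matcher. Concretely, I would invoke Theorem \ref{thm:achievablealpha} verbatim: the same matching scheme, with $I_A = \emptyset$ (no detected deletions to discard), still applies, and the achievability analysis goes through without modification because every step in the proof of Theorem \ref{thm:achievablealpha} remains valid for $\alpha = 0$.

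Next I would simplify the three terms in the achievable rate of Theorem \ref{thm:achievablealpha} at $\alpha = 0$. First, the prefactor $(1-\alpha\delta)$ becomes $1$. Second, the argument of the binary entropy simplifies as
\begin{equation*}
H_b\!\left(\frac{1-\delta}{1-\alpha\delta}\right) = H_b(1-\delta) = H_b(\delta),
\end{equation*}
where the last equality uses the symmetry $H_b(p) = H_b(1-p)$. Third, the lack-of-information penalty reduces to $(1-\alpha)\delta\log(|\mathfrak{X}|-1) = \delta\log(|\mathfrak{X}|-1)$. Substituting these into the bound of Theorem \ref{thm:achievablealpha} yields exactly
\begin{equation*}
R < \bigl[H(X) - H_b(\delta) - \delta\log(|\mathfrak{X}|-1)\bigr]^+,
\end{equation*}
which is the claim of the corollary.

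There is no genuine obstacle here, since the corollary is a one-line consequence of the theorem; the only small care needed is to note that the condition $\delta < 1 - 1/|\mathfrak{X}|$ inherited from Theorem \ref{thm:achievablealpha} is precisely what ensures $k/n \ge 1/|\mathfrak{X}|$ in the bound on $F(n,k,|\mathfrak{X}|)$, so the same combinatorial estimate used in the proof of Theorem \ref{thm:achievablealpha} continues to apply. The value of stating the corollary separately is interpretive rather than technical: it isolates the pure ``deletion channel'' flavor of the matching problem, and the resulting bound should be compared with the standard achievability expressions for the deletion channel in \cite{diggavi1603788, cheraghchi2020overview}.
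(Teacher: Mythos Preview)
Your proposal is correct and matches the paper's intent: the corollary is stated immediately after Theorem~\ref{thm:achievablealpha} with no separate proof, precisely because it is the $\alpha=0$ specialization you describe. Your simplification of the three terms and your remark connecting the bound to the deletion-channel literature also align with the paper's discussion following the corollaries.
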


\begin{cor}{(Full Deletion Location Information)}\label{cor:fulldel}
In the presence of full deletion location information ($\alpha=1$), any database growth rate $R$ satisfying
$$R<(1-\delta) H(X)$$
is achievable.
\end{cor}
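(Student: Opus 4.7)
The plan is to derive this corollary as a direct specialization of Theorem~\ref{thm:achievablealpha} with $\alpha=1$. Under that substitution the prefactor $(1-\alpha\delta)$ collapses to $(1-\delta)$, the argument of the binary entropy $\frac{1-\delta}{1-\alpha\delta}$ equals $1$ so that $H_b(1)=0$, and the penalty $(1-\alpha)\delta\log(|\mathfrak{X}|-1)$ vanishes. What remains is $[(1-\delta)H(X)]^+=(1-\delta)H(X)$, matching the corollary statement exactly, with the positivity constraint automatically satisfied since $H(X)\ge 0$.

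For intuition and as a self-contained sanity check, I would also sketch a direct argument that bypasses the $H_b$ and $\log(|\mathfrak{X}|-1)$ penalty machinery altogether, since both are artefacts of having to guess the deletion locations. With $\alpha=1$ the detected set $I_A$ coincides almost surely with the actual deletion set, so after discarding the columns of $\mathcal{C}^{(1)}$ indexed by $I_A$ both databases are aligned with exactly $K=n-\sum_i D_i$ columns and the retained rows satisfy $\mathbf{Y}_j=\boldsymbol{\Theta}\circ\mathbf{X}_{\boldsymbol{\Theta}^{-1}(j)}([n]\setminus I_A)$ without any further distortion. The matching rule then reduces to declaring $\hat{\boldsymbol{\Theta}}^{-1}(j_1)=i_1$ whenever $\mathbf{X}_{i_1}([n]\setminus I_A)$ is equal to $\mathbf{Y}_{j_1}$, lies in $A_\epsilon^{(K)}(X)$, and no other row produces the same pattern. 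Conditioning on the high-probability event $K\ge k=\lfloor n(1-\delta-\tilde\epsilon)\rfloor$, the pairwise collision probability simplifies to $P_{col,2}\le 2^{-K(H(X)-\epsilon)}$, and a union bound over at most $2^{nR}-1$ competitors yields $P_e\le 2^{nR-K(H(X)-\epsilon)}+\epsilon+\kappa_n$, which vanishes as $n\to\infty$ whenever $R<(1-\delta)(H(X)-\epsilon)$. Letting $\epsilon,\tilde\epsilon\downarrow 0$ completes the argument.

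The only delicate step is the concentration of $K/n$ on $1-\delta$, already dealt with in the proof of Theorem~\ref{thm:achievablealpha} via the Law of Large Numbers; hence there is no genuine obstacle. The corollary is strictly easier than the general theorem because full deletion-location knowledge removes the combinatorial search over candidate deletion patterns that forced the $H_b\!\left(\frac{k}{n-a}\right)$ and $(|\mathfrak{X}|-1)^{n-a-k}$ counting bounds in the general proof.
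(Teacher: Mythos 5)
Your specialization of Theorem~\ref{thm:achievablealpha} at $\alpha=1$ (so that $H_b(1)=0$ and the $\log(|\mathfrak{X}|-1)$ penalty vanishes, leaving $[(1-\delta)H(X)]^+=(1-\delta)H(X)$) is exactly the argument the paper intends for this corollary, which it states without further proof. Your supplementary direct argument is a correct and harmless bonus, but the core derivation matches the paper's route.
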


The achievable rate as a function of the deletion probability for different the deletion detection probabilities is illustrated in Figure \ref{fig:Rvsdel}.

Note that since the deletion pattern across rows is not exploited in Theorem \ref{thm:achievablealpha}, Corollary \ref{cor:nodel} is closely related to the deletion channel rate \cite{diggavi1603788}, while Corollary \ref{cor:fulldel} is related to the erasure channel capacity. However, in contrast to the channel capacity results, in the database matching problem, the database distribution $p_X$ is fixed and cannot be optimized.

\begin{figure}[t]
\centerline{\includegraphics[width=0.40\textwidth,height=0.26\textheight]{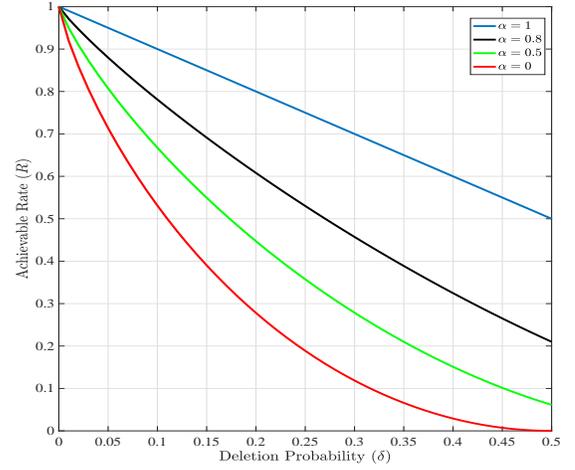}}
\caption{Achievable database growth rate ($R$) vs. deletion probability ($\delta$) for different deletion detection probabilities ($\alpha$), when $X\sim Bernoulli(\frac{1}{2})$. Notice that for $\delta\approx0.4$ there is a twenty-fold difference between the achievable rates in the presence ($\alpha=1$) and absence ($\alpha=0$) of the deletion location information, showing the significance of deletion detection, for fairly large $\delta$.}
\label{fig:Rvsdel}
\end{figure}

\section{Deletion Detection}
\label{sec:deletionlocationextraction}
In Section \ref{sec:achievabledbgrowthrates}, we assumed a given deletion detection probability $\alpha$ and found a corresponding achievable database growth rate. However, in practice one may not have such a partial deletion location information. One could have a correctly-matched set of rows as \textit{seeds} (\cite{shirani2017seeded,fishkind2019seeded}). In this section, we assume we have access to a seed of $B$ correctly-matched rows of databases $\mathcal{C}^{(1)}$ and $\mathcal{C}^{(2)}$, denoted by $\mathcal{D}^{(1)}$ and $\mathcal{D}^{(2)}$, respectively. Note that having access to a batch of correctly-matched rows does not immediately reveal the deletion locations because many different deletion patterns may lead to the same row in $\mathcal{C}^{(2)}$. We propose an algorithm which extracts deletion location information from $B$ given seeds by exploiting the fact that the deletion occurs columnwise. Then we derive a lower bound on the deletion detection probability of our algorithm. 

Given two sets of correctly-matched rows $\mathcal{D}^{(1)}$ and $\mathcal{D}^{(2)}$, let $S(\mathcal{D}^{(1)},\mathcal{D}^{(2)})$ denote the number of column deletion patterns through which $\mathcal{D}^{(2)}$ can be obtained from $\mathcal{D}^{(1)}$. Here the counting function $S$ is an extension of a similar counting function, described in \cite{mitzenmacher2009survey}, to the columnwise deletion case.

A simple application of Bayes' theorem gives us the following proposition:
\begin{prop}\label{prop:posterior}
Let $I_D\subset [n]$ be the set of deletion indices. Given a batch of $B$ seeds $\mathcal{D}^{(1)},\mathcal{D}^{(2)}$, the posterior deletion probability  of a column $j\in[n]$ is
$$P(j\in I_D|\mathcal{D}^{(1)},\mathcal{D}^{(2)})=\frac{S(\Tilde{\mathcal{D}}_j^{(1)},\mathcal{D}^{(2)})}{S(\mathcal{D}^{(1)},\mathcal{D}^{(2)})}$$
where $\Tilde{\mathcal{D}}_j^{(1)}$ is obtained by removing the $j$\textsuperscript{th} column of $\mathcal{D}^{(1)}$ and appending the rest of the columns.
\end{prop}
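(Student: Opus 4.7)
The plan is to prove this by direct application of Bayes' rule, with the main work being a bijection argument that reduces the conditional counting to the unconditional one. First I would write
\begin{align*}
P(j\in I_D\mid \mathcal{D}^{(1)},\mathcal{D}^{(2)})
= \frac{P(\mathcal{D}^{(2)},\,j\in I_D\mid \mathcal{D}^{(1)})}{P(\mathcal{D}^{(2)}\mid \mathcal{D}^{(1)})},
\end{align*}
and then expand both probabilities by the law of total probability over all realizations $\mathbf{d}\in\{0,1\}^n$ of the deletion pattern $\mathbf{D}^n$. Since $\mathbf{D}^n$ is independent of $\mathcal{C}^{(1)}$ (and hence of $\mathcal{D}^{(1)}$), and since $P(\mathcal{D}^{(2)}\mid \mathbf{D}^n=\mathbf{d},\mathcal{D}^{(1)})$ is a $\{0,1\}$-valued indicator of whether $\mathbf{d}$ is a \emph{valid} pattern transforming $\mathcal{D}^{(1)}$ into $\mathcal{D}^{(2)}$, both numerator and denominator reduce to sums of $P(\mathbf{D}^n=\mathbf{d})=\delta^{|\mathbf{d}|}(1-\delta)^{n-|\mathbf{d}|}$ over valid patterns.

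The next observation, which is the key simplification, is that once $\mathcal{D}^{(2)}$ is fixed, the number of deletions $d=n-(\text{\# columns of }\mathcal{D}^{(2)})$ is completely determined. Therefore every valid pattern, whether or not it is constrained to have $d_j=1$, has the same weight $\delta^d(1-\delta)^{n-d}$. Pulling this common factor out, the denominator collapses to $S(\mathcal{D}^{(1)},\mathcal{D}^{(2)})\,\delta^d(1-\delta)^{n-d}$ by definition of $S$.

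For the numerator, I would set up a bijection between valid length-$n$ patterns with $d_j=1$ transforming $\mathcal{D}^{(1)}$ into $\mathcal{D}^{(2)}$, and valid length-$(n-1)$ patterns transforming $\Tilde{\mathcal{D}}_j^{(1)}$ into $\mathcal{D}^{(2)}$, by simply deleting the $j$-th coordinate of the pattern. Since column $j$ of $\mathcal{D}^{(1)}$ is discarded anyway when $d_j=1$, this map is a bijection, so the number of terms in the numerator is exactly $S(\Tilde{\mathcal{D}}_j^{(1)},\mathcal{D}^{(2)})$. The numerator becomes $S(\Tilde{\mathcal{D}}_j^{(1)},\mathcal{D}^{(2)})\,\delta^d(1-\delta)^{n-d}$, and the $\delta^d(1-\delta)^{n-d}$ factors cancel, yielding the claimed formula.

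The only delicate point is the bijection step and the verification that the per-pattern probability is the same on both sides; this hinges on the fact that $|\mathbf{d}|$ is already determined by $\mathcal{D}^{(2)}$, so conditioning on the additional event $d_j=1$ does not change the individual weights, only the cardinality of the valid set. Everything else is bookkeeping.
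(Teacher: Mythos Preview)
Your proposal is correct and follows exactly the approach the paper indicates: the paper simply states that the proposition is ``a simple application of Bayes' theorem'' and gives no further details, so your write-up is a fleshed-out version of that same argument. The key step you identify---that conditioning on $\mathcal{D}^{(2)}$ fixes the number of deletions, making all valid patterns equiprobable so the ratio reduces to a ratio of counts---is precisely what makes the Bayes computation collapse to the stated formula.
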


Our proposed algorithm classifies columns into the set of deleted columns, the set of retained columns, and the set of columns where the algorithm fails to make a decision, based on the posterior deletion probabilities given in Proposition \ref{prop:posterior}, calculated for a given batch of $B$ correctly-matched rows.

Let $A_\epsilon^{(B)}$ be the $\epsilon$-typical set associated with $p_X$ with parameter $B$, $K$ be the (random) number of columns in $\mathcal{D}^{(2)}$ and $\mathbf{D}_j$ denote the $j$\textsuperscript{th} column of $\mathcal{D}^{(1)}$. Given a batch of correctly-matched pairs of B rows, we first calculate the posterior probability vector $\mathbf{P}=[P_1,...,P_n]$ from Proposition \ref{prop:posterior}.
We then define the \textit{deletion detection function} $f:\mathfrak{X}^{B\times n}\times \mathfrak{X}^{B\times K}\times [n]\to\{0,1,\text{inc}\}$ by
$$f(\mathcal{D}^{(1)},\mathcal{D}^{(2)},j)=\left\{
    \begin{array}{lll}
      0, &  P_j=0 \text{ and } \mathbf{D}_j\in A_\epsilon^{(B)} \\
      1, & P_j=1 \text{ and } \mathbf{D}_j\in A_\epsilon^{(B)}\\
     \text{inc}, & otherwise 
    \end{array} 
\right.$$
Here $f(\mathcal{D}^{(1)},\mathcal{D}^{(2)},j)=1$ implies that the $j$\textsuperscript{th} column is certainly deleted while $f(\mathcal{D}^{(1)},\mathcal{D}^{(2)},j)=0$ implies that the $j$\textsuperscript{th} column is certainly retained. Otherwise we do not make a decision and denote this inconclusive result by inc.

A lower bound on the performance of the deletion detection function $f$ in terms of the probability of detecting a deleted column is provided in the next theorem.
\begin{thm}\label{thm:alphalowerbound}
For the database matching problem in Section \ref{sec:problemformulation}, assume no partial deletion location information, ($\alpha=0$). Let $\mathcal{D}^{(1)},\mathcal{D}^{(2)}$ be a batch of correctly-matched $B$ rows of the unlabeled database $\mathcal{C}^{(1)}$, and the corresponding column deleted database $\mathcal{C}^{(2)}$. Then $$P(f(\mathcal{D}^{(1)},\mathcal{D}^{(2)},j)=1|j\in I_D)\ge 1-\epsilon-n 2^{-B(H(X)-\epsilon)} (1-\delta)$$
\end{thm}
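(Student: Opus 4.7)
The plan is to upper bound $1-P(f(\mathcal{D}^{(1)},\mathcal{D}^{(2)},j)=1\mid j\in I_D)$. Since the definition of $f$ requires both $P_j=1$ and $\mathbf{D}_j\in A_\epsilon^{(B)}$, a union bound gives
$$1-P(f=1\mid j\in I_D)\le P\bigl(\mathbf{D}_j\notin A_\epsilon^{(B)}\bigr)+P\bigl(P_j<1,\mathbf{D}_j\in A_\epsilon^{(B)}\bigm|j\in I_D\bigr).$$
Because the entries of $\mathbf{D}_j$ are i.i.d.\ $\sim p_X$ and independent of the deletion pattern, the AEP bounds the first term by $\epsilon$ for $B$ sufficiently large.

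The core step is the following structural sufficient condition for $P_j=1$: if $j\in I_D$ and $\mathbf{D}_j\ne \mathbf{D}_{j'}$ for every $j'\in[n]\setminus I_D$, then $P_j=1$. I would prove this by contradiction through Proposition \ref{prop:posterior}. Were $P_j<1$, there would exist a deletion pattern $I_D'$ with $j\notin I_D'$ that also produces $\mathcal{D}^{(2)}$ from $\mathcal{D}^{(1)}$. Under $I_D'$, column $j$ is retained, so $\mathbf{D}_j$ must appear as some column of $\mathcal{D}^{(1)}([n]\setminus I_D')=\mathcal{D}^{(2)}$. However, the columns of $\mathcal{D}^{(2)}$ are exactly $\{\mathbf{D}_{j''}:j''\in[n]\setminus I_D\}$ in order, which forces $\mathbf{D}_j=\mathbf{D}_{j''}$ for some $j''\in[n]\setminus I_D$, contradicting the hypothesis.

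Taking the contrapositive and applying a union bound over $j'\in[n]\setminus\{j\}$ yields
$$P\bigl(P_j<1,\mathbf{D}_j\in A_\epsilon^{(B)}\bigm|j\in I_D\bigr)\le\sum_{j'\ne j}P\bigl(D_{j'}=0,\mathbf{D}_{j'}=\mathbf{D}_j,\mathbf{D}_j\in A_\epsilon^{(B)}\bigm|j\in I_D\bigr).$$
Since the deletion pattern is independent of the database entries and the $D_i$'s are mutually independent Bernoulli$(\delta)$, each summand factorises as $(1-\delta)\cdot P(\mathbf{D}_{j'}=\mathbf{D}_j,\mathbf{D}_j\in A_\epsilon^{(B)})$. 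Conditioning on $\mathbf{D}_j$ and invoking the typical-set bound $p(\mathbf{x})\le 2^{-B(H(X)-\epsilon)}$ for $\mathbf{x}\in A_\epsilon^{(B)}$ gives $P(\mathbf{D}_{j'}=\mathbf{D}_j,\mathbf{D}_j\in A_\epsilon^{(B)})\le 2^{-B(H(X)-\epsilon)}$. Summing over at most $n$ indices and combining with the typicality slack $\epsilon$ yields the claimed lower bound $1-\epsilon-n(1-\delta)2^{-B(H(X)-\epsilon)}$.

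The main obstacle is isolating the clean ``unique-column'' sufficient condition for $P_j=1$ and recognising that it suffices to check uniqueness against \emph{retained} columns only (not all columns), as this is precisely what produces the $(1-\delta)$ factor in the final bound. Once this structural observation is in place, the remaining analysis is a routine union bound combined with standard typical-set estimates.
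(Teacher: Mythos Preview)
Your proposal is correct and follows essentially the same route as the paper: the paper encapsulates your ``unique-column'' sufficient condition as an auxiliary detector $g$ that declares $1$ precisely when $\mathbf{D}_j$ is not a column of $\mathcal{D}^{(2)}$ (equivalently, $\mathbf{D}_j\ne\mathbf{D}_{j'}$ for all retained $j'$), shows $g=1\Rightarrow f=1$ via the same $S(\mathcal{D}^{(1)},\mathcal{D}^{(2)})=S(\tilde{\mathcal{D}}_j^{(1)},\mathcal{D}^{(2)})$ observation you obtain by contradiction, and then applies the identical union bound plus typical-set estimate to extract the $(1-\delta)$ and $2^{-B(H(X)-\epsilon)}$ factors.
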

\begin{proof}
Consider a simpler deletion detection function which decides if a column is deleted or not by looking at the existence of the columns of $\mathcal{D}^{(1)}$ in $\mathcal{D}^{(2)}$. Since no noise is present on the retained columns, if a column is missing from $\mathcal{D}^{(2)}$, this function decides that the column is deleted, otherwise it doesn't make any decision. We define this simpler function as $g:\mathfrak{X}^{B\times n}\times \mathfrak{X}^{B\times K}\times [n]\to\{1,\text{inc}\} \text{ where}$ 
$$g(\mathcal{D}^{(1)},\mathcal{D}^{(2)},j)=\left\{
    \begin{array}{lll}
      1, &  \mathbf{D}_j\text{ is not a column of }\mathcal{D}^{(2)} \\
      &\text{and }\mathbf{D}_j\in A_\epsilon^{(B)}\\
     \text{inc}, & otherwise 
    \end{array} 
\right.$$
Note that the function $f$ focuses on both the order and the existence of the columns of $\mathcal{D}^{(1)}$ in $\mathcal{D}^{(2)}$ whereas $g$ only focuses on the existence. Furthermore, if $\mathbf{D}_j\text{ is not a column of }\mathcal{D}^{(2)}$, one can discard it from $\mathcal{D}^{(1)}$ when counting the number patterns $\mathcal{D}^{(2)}$ occurs columnwise in $\mathcal{D}^{(1)}$. In other words if, $\mathbf{D}_j\text{ is not a column of }\mathcal{D}^{(2)}$, then $$S(\mathcal{D}^{(1)},\mathcal{D}^{(2)})=S(\Tilde{\mathcal{D}}_j^{(1)},\mathcal{D}^{(2)})$$ Thus $g(\mathcal{D}^{(1)},\mathcal{D}^{(2)},j)=1$ implies that $f(\mathcal{D}^{(1)},\mathcal{D}^{(2)},j)=1$.
For brevity, let $\alpha=P(f(\mathcal{D}^{(1)},\mathcal{D}^{(2)},j)=1|j\in I_D)$. Then using the fact that the columns $\mathbf{D}_j$ are \textit{i.i.d.} and the deletion is independent of $\mathcal{D}^{(1)}$, we have the following
\begin{align*}
    1-\alpha &= P(f(\mathcal{D}^{(1)},\mathcal{D}^{(2)},j)\neq1|j\in I_D)\\
     &\le  P(g(\mathcal{D}^{(1)},\mathcal{D}^{(2)},j)\neq1|j\in I_D)\\
        &=  P(\mathbf{D}_j\text{ is a column of }\mathcal{D}^{(2)}|j\in I_D, \mathbf{D}_j\in A_\epsilon^{(B)})\\
        &\qquad P(\mathbf{D}_j\in A_\epsilon^{(B)}) +P(\mathbf{D}_j\notin A_\epsilon^{(B)})\\
        &\le  P(\exists i\neq j, \mathbf{D}_j=\mathbf{D}_i, i\notin I_D|j\in I_D,\mathbf{D}_j\in A_\epsilon^{(B)})+\epsilon\\
        &\le  P(\exists i\neq j, \mathbf{D}_j=\mathbf{D}_i, i\notin I_D|\mathbf{D}_j\in A_\epsilon^{(B)})+\epsilon\\
        &\le \sum\limits_{i=1;i\neq j}^{n} P(\mathbf{D}_j=\mathbf{D}_i |i\notin I_D, \mathbf{D}_j\in A_\epsilon^{(B)}) P(i\notin I_D)+\epsilon\\
        &=\sum\limits_{i=1;i\neq j}^{n} P(\mathbf{D}_i=\mathbf{D}_j|\mathbf{D}_j\in A_\epsilon^{(B)}) P(i\notin I_D)+\epsilon\\
    &\le\sum\limits_{i=1;i\neq j}^{n} 2^{-B(H(X)-\epsilon)} (1-\delta)+\epsilon\\
    &\le n 2^{-B(H(X)-\epsilon)} (1-\delta)+\epsilon
\end{align*}
which completes the proof.
\end{proof}

\begin{cor}
To guarantee $P(f(\mathcal{D}^{(1)},\mathcal{D}^{(2)},j)=1|j\in I_D)\ge \alpha$, a batch size of $B\ge \frac{1}{H(X)}\log \left(n\frac{1-\delta}{1-\alpha}\right)$ is needed. This suggests that a seed size of $O(\log n)=O(\log\log m)$ ensures a non-zero deletion detection probability $\alpha$. Furthermore if $B$ grows slower than $\log n$, the lower bound becomes trivial.
\end{cor}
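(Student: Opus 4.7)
The plan is to rearrange the bound from Theorem \ref{thm:alphalowerbound} to solve for $B$ as a function of the target detection probability $\alpha$, and then translate the resulting order estimate into the $\log\log m$ statement by invoking the definition of the database growth rate. Since the statement is essentially a bookkeeping consequence of the inequality already established, no new probabilistic machinery is needed; the work is algebraic.

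First, I would start from the guarantee $1-\epsilon-n\,2^{-B(H(X)-\epsilon)}(1-\delta)\ge \alpha$ supplied by Theorem \ref{thm:alphalowerbound}, absorb $\epsilon$ into the target $\alpha$ by taking it arbitrarily small, and rearrange to
\[
B(H(X)-\epsilon)\ge \log\!\left(n\,\frac{1-\delta}{1-\alpha-\epsilon}\right).
\]
Letting $\epsilon\downarrow 0$ yields the threshold $B\ge \frac{1}{H(X)}\log\!\left(n\,\frac{1-\delta}{1-\alpha}\right)$ stated in the corollary. The only care needed here is to make sure that $1-\alpha$ is bounded away from zero, which is implicit because $\alpha<1$ is the regime of interest.

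Second, to obtain the scaling claim I would observe that the right-hand side is $\Theta(\log n)$ for any fixed $\delta<1$ and fixed target $\alpha<1$, so a batch of size $O(\log n)$ suffices. To convert $\log n$ into $\log\log m$, I would use the definition $R=\lim_{n\to\infty}\tfrac{1}{n}\log_2 m$, which gives $\log m=nR+o(n)$ and hence $\log\log m=\log n+O(1)$. Therefore $O(\log n)=O(\log\log m)$, as claimed.

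Third, for the triviality statement I would argue by contrapositive: if $B$ grows strictly slower than $\log n$, then $B\,H(X)=o(\log n)$ and so $n\,2^{-B\,H(X)}=n\cdot 2^{-o(\log n)}\to\infty$. Since $1-\delta>0$ is a constant, the penalty term $n\,2^{-B(H(X)-\epsilon)}(1-\delta)$ eventually exceeds $1$, driving the lower bound in Theorem \ref{thm:alphalowerbound} below zero, at which point it yields no useful information. There is no real obstacle in any of these steps; the main subtlety, if any, is being careful with the $\epsilon$ term so that the threshold on $B$ and the triviality regime match cleanly at the boundary $B\asymp \log n$.
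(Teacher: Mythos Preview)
Your proposal is correct and is exactly the intended derivation; the paper states this corollary without proof, and the algebraic rearrangement of the Theorem~\ref{thm:alphalowerbound} bound together with the growth-rate identity $\log\log m=\log n+O(1)$ is precisely what is being left implicit. There is nothing to add.
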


\begin{cor}\label{rem:Brelatedton}
If $B=\omega(\log n)=\omega(\log\log m)$, for large $n$, we have $P(f(\mathcal{D}^{(1)},\mathcal{D}^{(2)},j)=1|j\in I_D)\ge 1-\epsilon$ .
\end{cor}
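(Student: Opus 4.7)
The plan is to derive this corollary directly as an asymptotic reading of the lower bound already proved in Theorem \ref{thm:alphalowerbound}. That theorem yields
$$P(f(\mathcal{D}^{(1)},\mathcal{D}^{(2)},j)=1\mid j\in I_D)\ge 1-\epsilon - n\,2^{-B(H(X)-\epsilon)}(1-\delta),$$
so the entire task reduces to showing that the additive penalty $n\,2^{-B(H(X)-\epsilon)}(1-\delta)$ vanishes as $n\to\infty$ under the hypothesis $B=\omega(\log n)$.

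First I would fix $\epsilon>0$ small enough that $H(X)-\epsilon>0$ (which is possible whenever $p_X$ is non-degenerate, the regime in which the problem is of interest). Taking the base-2 logarithm of the penalty term gives
$$\log_2\!\bigl[n\,2^{-B(H(X)-\epsilon)}(1-\delta)\bigr]=\log_2 n - B(H(X)-\epsilon)+\log_2(1-\delta).$$
By hypothesis $B/\log_2 n\to\infty$, so the term $-B(H(X)-\epsilon)$ dominates $\log_2 n$ and the whole quantity tends to $-\infty$. Consequently the penalty tends to $0$, and for all sufficiently large $n$ the lower bound in Theorem \ref{thm:alphalowerbound} is at least $1-\epsilon$ (up to absorbing a vanishing remainder into $\epsilon$ by relabeling).

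Next I would justify the stated equivalence $\omega(\log n)=\omega(\log\log m)$. Since the database growth rate $R=\lim_{n\to\infty}\tfrac{1}{n}\log_2 m$ is assumed positive and finite, $\log_2 m = \Theta(n)$, hence $\log_2\log_2 m = \log_2 n + \Theta(1)$. Thus the two asymptotic conditions on $B$ coincide up to constants, which can be absorbed into the $\omega(\cdot)$ notation.

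There is really no substantial obstacle here; the work was done in establishing Theorem \ref{thm:alphalowerbound}, and the corollary is essentially an algebraic observation about when a $\log n$ gap is closed by an exponential-in-$B$ decay. The only minor care point is ensuring $H(X)-\epsilon$ is kept strictly positive when letting $\epsilon\to 0$, which is handled by choosing $\epsilon$ first and then taking $n$ large, rather than in the reverse order.
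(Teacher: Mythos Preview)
Your proposal is correct and matches the paper's intended argument: the paper states this corollary without proof, treating it as an immediate consequence of the lower bound in Theorem~\ref{thm:alphalowerbound}, and your derivation that the penalty term $n\,2^{-B(H(X)-\epsilon)}(1-\delta)$ vanishes when $B=\omega(\log n)$, together with the identification $\log n=\Theta(\log\log m)$ via the database growth rate, is exactly the intended reasoning.
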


In Theorem \ref{thm:achievablealpha}, we assumed that detection of each deleted column is independent of the remaining deleted columns. However, the deletion detection discussed in this section does not necessarily lead to independence. In fact, no algorithm which extracts the deletion locations from databases directly can lead to an \textit{i.i.d.} detection process. For example, consider two adjacent columns with identical entries, both being deleted. We can detect deletion of either both columns or none.

\section{Conclusion}
\label{sec:conclusion}
In this work, we have studied a database matching problem under random column deletions. We have found an achievable database growth rate as a function of deletion detection probability $\alpha$ and showed that a nonzero $\alpha$ can significantly improve the achievable rate. Then assuming no initial deletion location information ($\alpha=0$), we have proposed an algorithm for detecting deletion locations when a batch of $B$ correctly-matched seed rows are given. We have found that in order for this algorithm to guarantee a non-zero detection probability, we need $B=O(\log n)=O(\log\log m)$. Our ongoing work considers matching at the database level rather than matching each row separately, potentially leading to higher achievable rates.

\bibliography{references}
\bibliographystyle{IEEEtran}
\end{document}